\definecolor{DarkGreen}{rgb}{0.2,0.6,0.2}
\def\om{\omega}
\def\trace#1{\text{trace}\, #1}
\newcommand{\norm}[1]{\left\lVert {#1}\right\rVert}
\newcolumntype{Y}{>{\centering\arraybackslash}X}
\def\ua{\uparrow}
\def\da{\downarrow}
\def\wh{\widehat}
\def\wt{\widetilde}
\newcolumntype{C}{>{\centering\arraybackslash}X}
\def\ignore#1{}
\newcommand\scalemath[2]{\scalebox{#1}{\mbox{\ensuremath{\displaystyle #2}}}}
\numberwithin{equation}{section}
\def\cR{{\mathscr R}}
\newtheorem{theorem}{Theorem}[section]
\newtheorem{proposition}[theorem]{Proposition}
\newtheorem{lemma}[theorem]{Lemma}
\theoremstyle{definition}
\newtheorem{definition}[theorem]{Definition}
\newtheorem{remark}[theorem]{Remark}
\def\<{\langle}
\def\>{\rangle}
\def\wt#1{\widetilde{#1}}
\def\ua{\uparrow}
\def\da{\downarrow}
\def\wh{\widehat}
\def\wt{\widetilde}
\def\argmin{\mathop{\hbox{\rm arg\,min}}}
\newcommand{\bR}{\mathbb{R}}
\newcommand{\bN}{\mathbb{N}}
\newcommand{\bP}{\mathbb{P}}
\newcommand{\bT}{\mathbb{T}}
\begin{document}
	\title{On the rate of convergence of estimating the Hurst parameter of rough stochastic volatility models}
\author{Xiyue Han$^*$ and Alexander Schied\thanks{
		Department of Statistics and Actuarial Science, University of Waterloo, 200 University Ave W, Waterloo, Ontario, N2L 3G1, Canada. E-Mails: {\tt xiyue.han@uwaterloo.ca, aschied@uwaterloo.ca}.\hfill\break
		The authors gratefully acknowledge support from the Natural Sciences and
		Engineering Research Council of Canada through grant
		RGPIN-2024-03761. }}
	\date{\normalsize First version: April 15, 2025\\
	\normalsize This version: \today}
\maketitle

\begin{abstract}
	In [Han \& Schied, 2023, \textit{arXiv 2307.02582}], an easily computable scale-invariant estimator $\cR^s_n$ was constructed to estimate the Hurst parameter of the drifted fractional Brownian motion $X$ from its antiderivative. This paper extends this result by proving that $\cR^s_n$ also consistently estimates the Hurst parameter when applied to the antiderivative of $g \circ X$ for a general nonlinear function $g$. We also establish an almost sure rate of convergence in this general setting.  Our result applies, in particular, to the estimation of the Hurst parameter of a wide class of rough stochastic volatility models from discrete observations of the integrated variance, including the rough fractional stochastic volatility model.
	
\end{abstract}

\section{Introduction and statement of the main result}
We consider a stochastic volatility model, where the price process is driven by a stochastic differential equation with respect to a standard Brownian motion $B$, 
\begin{equation*}
	dS_t=\sigma_tS_t\,dB_t.
\end{equation*}
Here, the process $\sigma$ is continuous and adapted and referred to as the volatility process. It was discovered empirically by Gatheral, Jaisson and Rosenbaum \cite{GatheralRosenbaum} that the volatility process $\sigma$ does not exhibit diffusive behavior but instead is much rougher. This discovery led to the development of rough stochastic volatility models, in which the smooth diffusive dynamics of classical models are replaced by rougher counterparts, such as fractional Brownian motion or Gaussian Volterra processes. A specific example here is the \textit{rough fractional stochastic volatility model} proposed in \cite[Section 3]{GatheralRosenbaum}, where logarithmic volatility is modeled by a fractional Ornstein--Uhlenbeck process $X^H$. To be more precise, we have
\begin{equation}\label{log volatility eq}
	\sigma_t=\exp(X^H_t),
\end{equation}
where $X^H$ solves the following integral equation
\begin{equation}\label{OU process eq}
	X^H_t=x_0+\rho\int_0^t(\mu-X^H_s)\,ds+W^H_t,\qquad t\ge0,
\end{equation}
for a fractional Brownian motion $W^H$ with Hurst parameter $H\in(0,1/2)$. In particular, it was shown in \cite{Gatheral2018rough} that a value $H \approx 0.1$ appears to be most adequate for capturing the stylized facts of empirical volatility time series. Following the advent of the rough fractional stochastic volatility, many rough volatility models have been proposed. Notable examples include the rough Heston model \cite{EuchRosenbaumGatheral2019roughening, EuchRosenbaumRoughHeston18} and the rough Bergomi model \cite{BayerGatheralFriz16, Forde2022} in which the volatility process is based on Gaussian Volterra processes. For an overview of recent developments in rough volatility, we refer to the book \cite{RoughvolBook}. 

In the model \eqref{log volatility eq}, the degree of roughness of the volatility process is governed by the Hurst parameter $H$. When trying to estimate the Hurst parameter $H$, the major difficulty is that the volatility process $\sigma$ cannot be observed directly from given data; only the asset prices $S$ can be observed. Thus, one typically computes the quadratic variation of the log prices, 
\begin{equation}\label{eq integrated variance}
	\<\log S\>_t = \int_{0}^t \sigma_s^2 \,ds,
\end{equation}
which is also known as the integrated variance, and then obtains proxy values $\hat \sigma_t$ by numerical differentiation. The roughness estimation is then based on those proxy values. However, it was shown in \cite{ContDasArtefact} that the error that arises in the numerical differentiation might distort the final roughness estimation. 

Various approaches have been proposed to tackle this issue. Assuming that the volatility process is driven by a fractional Brownian motion, Bolko et al.~\cite{BolkoPakkanen2022GMM} employ the generalized method of moments to estimate the Hurst parameter. In addition, Fukasawa et al.~\cite{Fukasawa2022Estimation} develop a Whittle-type estimator under a similar parametric setting. Chong et al.~\cite{Chong2022CLT, Chong2022Minimax} substantially extend the previous results by considering a semi-parametric setup, in which, with the exception of the Hurst parameter of the underlying fractional Brownian motion, all components are fully non-parametric.

To address the same issue,  in \cite{HanSchiedDerivative} we constructed an estimator $\wh \cR_n$ that estimates the Hurst parameter $H$ in \eqref{log volatility eq} based on discrete observations of the integrated variance \eqref{eq integrated variance}. In contrast to other estimation schemes \cite{ BolkoPakkanen2022GMM,Chong2022CLT, Chong2022Minimax, Fukasawa2022Estimation}, our estimator is constructed in a strictly pathwise setting and, in fact, estimates the so-called roughness exponent $R$, which coincides with the Hurst parameter for fractional Brownian motion \cite{HanSchiedHurst}. The fact that our estimator is built on a strictly pathwise approach makes it very versatile and applicable also in situations where trajectories are not based on fractional Brownian motion; see, e.g., \cite[Examples 3.5]{HanSchiedDerivative} for further discussions.

In our pathwise setting, we consider a given but unknown function $x \in C[0,1]$ and denote $y(t) := \int_{0}^{t}g(x(s))\,ds$ for a function $g \in C^2(\bR)$. Based on the observations of function $y$ over the dyadic partition $\bT_{n+2}$, i.e., $\{y(k2^{-n-2}):k = 0, \cdots, 2^{n+2}\}$, we introduce the coefficients
\begin{equation}\label{eq vartheta}
	\vartheta_{n, k}:=2^{3 n / 2+3}\left(y\left(\frac{4 k}{2^{n+2}}\right)-2 y\left(\frac{4 k+1}{2^{n+2}}\right)+2 y\left(\frac{4 k+3}{2^{n+2}}\right)-y\left(\frac{4 k+4}{2^{n+2}}\right)\right),
\end{equation}
for $0 \le k \le 2^n-1$. Our estimator for the roughness exponent of the trajectory $x$ is now given by 
\begin{equation}\label{wh Rn eq}
	\wh\cR_n(y): = 1-\frac{1}{n}\log_2 \sqrt{\sum_{k = 0}^{2^{n}-1}\vartheta_{n,k}^2}.
\end{equation}
In contrast to many other estimators proposed in the literature, $\wh\cR_n(y)$ can be computed in a straightforward manner. For instance, to estimate the Hurst parameter of the rough fractional stochastic volatility model \eqref{log volatility eq}---\eqref{OU process eq}, we take $x$ as a sample trajectory of the fractional Ornstein--Uhlenbeck process $X^H$ and $g(t) = e^{2t}$ so that 
\begin{equation}\label{y with g=e2t eq}
	y(t) = \int_0^t \big(\exp(x(s))\big)^2 \,ds, \qquad 0 \le t \le 1,
\end{equation}
replicates the integrated variance corresponding to the sample trajectory $x$.

In particular, the fact that our estimator $\wh \cR_n$ is derived from a purely pathwise consideration makes it applicable to an even wider class of processes. For instance, let $X$ be given by 	\begin{equation}\label{XH eq}
	X_t:=x_0+W^H_t+\int_0^t\xi_s\,ds,\qquad0\le t\le 1,
\end{equation}
where $\xi$ is progressively measurable with respect to the natural filtration of $W^H$ and satisfies the following additional assumption.
\begin{itemize}
	\item If $H<1/2$, we assume  that the function $t\mapsto\xi_t$ is $\bP$-a.s.~bounded in the sense that  there exists a finite random variable $C$ such that $|\xi_t(\om)|\le C(\om)$ for a.e.~$t$ and $\bP$-a.s.~$\om$. 
	\item If $H>1/2$, we assume that for $\bP$-a.s.~$\om$ the function $t\mapsto \xi_t(\om)$ is H\"older continuous with some exponent $\alpha(\om)>2H-1$.\end{itemize}

We emphasize that the class of processes defined by~\eqref{XH eq} is sufficiently general. In particular, for $H < 1/2$, any $\bP$-a.s continuous drift $\xi$ clearly satisfies the condition. One can also specify conditions on the drift term of a stochastic differential equation driven  by fractional Brownian motion under which the assumption is satisfied; see Theorem 1.6 in \cite{HanSchiedGirsanov}. Now, suppose that $g \in C^2(\bR)$ is strictly monotone and let $Y_t = \int_{0}^{t}g( X_s)\,ds$. It is shown in \cite[Corollary 2.3]{HanSchiedDerivative} that $\wh \cR_n(Y) \rightarrow H$ as $n \ua \infty$ with probability one. This consistency result applies in particular to the rough fractional stochastic volatility model defined in \eqref{log volatility eq} and \eqref{OU process eq}, where we take $\xi_s = \rho(\mu - X^H_s)$ and $g(t) = e^{2t}$.

However, the estimator $\wh \cR_n$ is  not scale-invariant, and the performance of $\wh \cR_n$ is highly sensitive to the underlying scale of the function $y$. To solve this issue, a scale-invariant modification of $\wh\cR_n$ was constructed in \cite{HanSchiedDerivative} as in the following definition.

\begin{definition}\label{sls and tls def}Fix $m\in\bN$ and $\alpha_0,\dots,\alpha_{m}\ge0$ with $\alpha_0>0$.
	For $n>m$, the \textit{sequential scaling factor} $\eta_{n}^{s}$ and the \textit{sequential scale estimate} $\cR^s_{n}(y)$ are defined as follows,
	\begin{equation}\label{eq_def_seqloc}
		\begin{split}
			\eta^{s}_{n}&:= \argmin_{\eta > 0} \sum_{k = n-m}^{n}\alpha_{n-k}\Big(\wh\cR_k(\eta y) - \wh\cR_{k-1}(\eta y)\Big)^2 \quad \text{and} \quad
			\cR^s_{n}(y):= \wh\cR_{n}(\eta^{s}_{n} y).
		\end{split}
	\end{equation}
	The corresponding mapping $\cR^s_{n}:C[0,1]\rightarrow \bR$ is called the \textit{sequential scale estimator}.
\end{definition}

There is no rule of thumb for choosing the parameters $\alpha_0, \dots, \alpha_m$. As a matter of fact, the performance of $\cR^s_n$ is dependent on the actual Hurst parameter $H$. Nevertheless, as will be shown in \eqref{eq rate of convergence Y}, regardless of the choice of $\alpha_0, \dots, \alpha_m$, the sequential scale estimator $\cR^s_n$ shares the same asymptotic rate of convergence; see also \cite{HanSchiedHurst} for further approaches to construct scale-invariant estimators. For given $\alpha_0, \dots, \alpha_m$, the sequential scale estimator $\cR^s_n$ can be represented as a linear combination of the estimators $\wh\cR_k$ as follows,
\begin{equation*}
	\cR_n^s=\beta_{n, n} \widehat{\mathscr{R}}_n+\beta_{n, n-1} \widehat{\mathscr{R}}_{n-1}+\cdots+\beta_{n, n-m-1} \widehat{\mathscr{R}}_{n-m-1},
\end{equation*}
where the coefficients $\beta_{n,k}$ are explicitly given in \cite[Proposition 2.6]{HanSchiedDerivative}. To be more specific, we have
$$\beta_{n,k}=\begin{cases}\displaystyle1+\frac{\alpha_0}{c^{\textrm{\rm s}}_{n}n^2(n-1)}&\text{if $k=n$,}\\
	\displaystyle\frac1{c^{\textrm{\rm s}}_{n}nk}\Big(\frac{\alpha_{n-k}}{k-1}-\frac{\alpha_{n-k-1}}{k+1}\Big)&\text{if $n-m\le k\le n-1$,}\\
	\displaystyle
	\frac{-\alpha_m}{c^{\textrm{\rm s}}_{n}n(n-m)(n-m-1)}&\text{if $ k= n-m-1$,}
\end{cases}\quad\text{for}\quad c^{s}_{n}:= \sum_{k = n-m}^{n}\frac{\alpha_{n-k}}{k^{2}(k-1)^{2}}.
$$Therefore, the sequential scale estimator $\cR_n^s$ can be computed in a fast and straightforward manner. In addition, the rate of convergence of the sequential scale estimator $\cR^s_n$ was primarily studied in \cite[Theorem 2.7]{HanSchiedDerivative}, which is quoted here for the convenience of the reader. Let $X$ be as in \eqref{XH eq} and $Y_t = \int_{0}^{t} X_s\,ds$. Then the following almost sure rate of convergence holds for the sequential scale estimator $\cR^s_n$,
\begin{equation}\label{eq rate of convergence Y}
	|\cR^s_n(Y) - H| = \mathcal{O}\big(2^{-n/2}\sqrt{\log n}\big).
\end{equation}
Here, the rate of convergence was only established under the assumption that $g$ is the identity function. Hence, this result cannot be applied directly to establish the consistency or the convergence rate of $\cR^s_n$ for rough stochastic volatility models. In these models, we typically make discrete observations of the integrated variance of the form
$$\<\log S\>_t = \int_0^t \sigma^2_s\,ds=\int_0^t g(X_s)\,ds,\qquad t \in [0,1].
$$
for some strictly increasing nonlinear function $g \in C^2(\bR)$. Such a choice leads to non-Gaussian dynamics and thus lies beyond the scope of \cite[Theorem 2.7]{HanSchiedDerivative}. Our following theorem extends the convergence result \eqref{eq rate of convergence Y} to the case in which  $g$  twice continuously differentiable function satisfying a very mild regularity condition.

\begin{theorem}\label{thm main}
	Suppose that $g \in C^2(\bR)$ satisfies\begin{equation}\label{eq positive}
		\int_0^1 \big(g’(X(s))\big)^2 \, ds > 0 \qquad \bP\text{-a.s.}
	\end{equation}
	Let $m \in \bN$,  $\alpha_0 > 0$ and $\alpha_1,\dots,\alpha_m \ge 0$. Let $X$ be as in \eqref{XH eq} and 	\begin{equation*}
		Y_t = \int_0^t g(X_s) \,ds, \qquad t \in [0,1].
	\end{equation*}
	Then the following almost sure rate of convergence holds for the sequential scale estimator $\cR^s_n$, 
	\begin{equation}\label{eq rate}
		|\cR^s_n(Y) - H| = \mathcal{O}\left(\sqrt{n}\cdot2^{-(\frac{H}{2} \wedge \frac{1}{4})n}\right).
	\end{equation}
\end{theorem}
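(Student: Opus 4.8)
The plan is to first reduce the assertion to a quantitative bound on the single-scale estimators $\wh\cR_k(Y)$, and then to control those via a block-wise Taylor expansion of $g$. By \cite[Proposition 2.6]{HanSchiedDerivative} one has the representation $\cR^s_n(Y)=\sum_{k=n-m-1}^{n}\beta_{n,k}\wh\cR_k(Y)$, and a direct computation with the explicit coefficients given there yields $\sum_k\beta_{n,k}=1$ and $\sum_k|\beta_{n,k}|/k=\mathcal{O}(1)$, while $\sum_k\beta_{n,k}/k=0$ since each $\vartheta_{k,j}$ is linear in its argument and $\cR^s_n$ is scale invariant. Writing $\sqrt{\sum_{j=0}^{2^k-1}\vartheta_{k,j}(Y)^2}=\Lambda\,2^{k(1-H)}(1+\varepsilon_k)$, where $\vartheta_{k,j}$ denotes the coefficient \eqref{eq vartheta} at scale $k$ and block index $j$, $\Lambda$ is a positive finite random variable, and $\varepsilon_k$ are residual terms, these identities give $\wh\cR_k(Y)=H-k^{-1}\log_2\Lambda-k^{-1}\log_2(1+\varepsilon_k)$, hence $\cR^s_n(Y)-H=-\sum_k(\beta_{n,k}/k)\log_2(1+\varepsilon_k)$, so that $|\cR^s_n(Y)-H|\le C\max_{n-m-1\le k\le n}|\varepsilon_k|$. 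It therefore suffices to produce such a $\Lambda\in(0,\infty)$ with $|\varepsilon_k|=\mathcal{O}\!\big(\sqrt k\,2^{-(\frac H2\wedge\frac14)k}\big)$.

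Next I would expand $g$ over each dyadic block. After one integration, the combination in \eqref{eq vartheta} has the difference weights $(1,-2,0,2,-1)$, which annihilate affine functions; thus $\vartheta_{k,j}(Y)=2^{3k/2+3}\int_{I_{k,j}}\psi_{k,j}(s)\,g(X_s)\,ds$ for the block $I_{k,j}=[4j\cdot2^{-k-2},(4j+4)2^{-k-2}]$ and an explicit piecewise-constant kernel $\psi_{k,j}$ with $\int\psi_{k,j}=0$ and vanishing first moment. Taylor expanding $g(X_s)=g(X_{u_j})+g'(X_{u_j})(X_s-X_{u_j})+\tfrac12 g''(\zeta_{j,s})(X_s-X_{u_j})^2$ about the left endpoint $u_j$ of $I_{k,j}$, the constant term drops out, the linear term contributes $g'(X_{u_j})\,\vartheta_{k,j}(Z)$ with $Z_t:=\int_0^t X_s\,ds$ (by \eqref{eq vartheta} applied to $Z$, since $\int\psi_{k,j}=0$), and the quadratic term is a remainder $\rho_{k,j}$ with $|\rho_{k,j}|\le C_\omega\,2^{k/2}\,\mathrm{osc}_j(X)^2$, where $C_\omega$ depends on $\|g''\|_{L^\infty(\mathrm{range}(X))}$ and $\mathrm{osc}_j(X):=\sup_{s,t\in I_{k,j}}|X_s-X_t|$. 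By the sharp uniform modulus of continuity of fractional Brownian motion one has $\max_{0\le j\le 2^k-1}\mathrm{osc}_j(X)\le C_\omega'\,2^{-kH}\sqrt k$ for all large $k$ (the drift $\xi$ contributing lower-order oscillation: for $H<1/2$ by boundedness, for $H>1/2$ by H\"older continuity), so $R_k:=\sum_j\rho_{k,j}^2=\mathcal{O}\!\big(k^2\,2^{(2-4H)k}\big)$. With $S_k:=\sum_j g'(X_{u_j})^2\vartheta_{k,j}(Z)^2$, the Cauchy--Schwarz inequality gives $\big|\sum_j\vartheta_{k,j}(Y)^2-S_k\big|\le 2\sqrt{S_kR_k}+R_k$; granted that $S_k$ is of order $2^{2k(1-H)}$, this yields $\sqrt{\sum_j\vartheta_{k,j}(Y)^2}=\sqrt{S_k}\,\big(1+\mathcal{O}(k\,2^{-Hk})\big)$.

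The main step is an ergodic analysis of $S_k$. From \eqref{eq rate of convergence Y} one knows that $2^{-2k(1-H)}\sum_j\vartheta_{k,j}(Z)^2\to\kappa$ a.s.\ for a positive finite constant $\kappa$; I would first upgrade this to a localized, rate-quantified form, namely that for $\ell\le k$ and uniformly over the $2^\ell$ dyadic intervals $[a,b]$ of length $2^{-\ell}$,
\begin{equation*}
	\sum_{j\,:\,u_j\in[a,b]}\vartheta_{k,j}(Z)^2=\kappa\,(b-a)\,2^{2k(1-H)}\big(1+\mathcal{O}(2^{-(k-\ell)/2}\sqrt k)\big),
\end{equation*}
which would follow by re-running the Gaussian concentration estimates behind \eqref{eq rate of convergence Y} on subintervals together with a union bound. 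Since $g\in C^2$ and $X$ has a.s.\ modulus of continuity of order $|t-s|^H\sqrt{\log(1/|t-s|)}$, the weight $t\mapsto(g'(X_t))^2$ has the same modulus; replacing $(g'(X_{u_j}))^2$ on each length-$2^{-\ell}$ dyadic interval by its value at the left endpoint costs $\mathcal{O}(2^{-\ell H}\sqrt\ell)$, and summing over those intervals identifies a Riemann sum of $(g'\circ X)^2$ up to the same error. Choosing $\ell=\lfloor k/2\rfloor$ balances the localization error $\mathcal{O}(2^{-kH/2}\sqrt k)$ against the fluctuation error $\mathcal{O}(2^{-k/4}\sqrt k)$ and gives $S_k=\Lambda^2\,2^{2k(1-H)}\big(1+\mathcal{O}(2^{-(\frac H2\wedge\frac14)k}\sqrt k)\big)$ with $\Lambda^2:=\kappa\int_0^1(g'(X_s))^2\,ds$, which is positive and finite precisely by \eqref{eq positive}.

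Combining the previous two steps gives $|\varepsilon_k|=\mathcal{O}\!\big(2^{-(\frac H2\wedge\frac14)k}\sqrt k\big)$, the Taylor-remainder contribution $\mathcal{O}(k\,2^{-Hk})$ being of strictly lower order for every $H$, and the reduction of the first step then produces \eqref{eq rate}. The hard part is the localized ergodic theorem with explicit rates used in the third step: one must extend the concentration inequalities of \cite{HanSchiedDerivative} so that they hold uniformly over dyadic subintervals and couple them cleanly with the modulus of continuity of $(g'\circ X)^2$; it is the interaction between the fluctuation rate $2^{-k/4}$ (from the subinterval averaging at scale $\ell\approx k/2$) and the roughness-dependent localization rate $2^{-kH/2}$ that produces the exponent $\tfrac H2\wedge\tfrac14$. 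A secondary technical point is to verify that the drift $\xi$ enters only at lower order throughout and that the $C^2$-regularity of $g$, rather than anything stronger, suffices for the remainder bound.
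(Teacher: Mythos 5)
Your proposal is correct and follows essentially the same route as the paper: a blockwise Taylor expansion of $g$ whose linear term reduces $\vartheta_{k,j}(Y)$ to $g'(X_{u_j})\vartheta_{k,j}(Z)$, a localized Gaussian concentration estimate with union bound over the $2^{k/2}$ dyadic blocks (the paper's Lemma~\ref{lemma uniform}, made possible by self-similarity and stationary increments), a Riemann-sum identification of $\int_0^1(g'(X_s))^2\,ds$ via the exact modulus of continuity, and the balance of the fluctuation rate $2^{-k/4}$ against the localization rate $2^{-kH/2}$ at block scale $\ell=k/2$, followed by the transfer to $\cR^s_n$ through the scale-invariance identity for the coefficients $\beta_{n,k}$. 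The only substantive difference is that the paper disposes of the drift $\xi$ at the outset via absolute continuity of the law of $X$ with respect to that of $x_0+W^H$, which is cleaner than arguing that $\xi$ contributes only lower-order terms inside the Gaussian concentration step.
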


\begin{remark}
	Note that the condition \eqref{eq positive} is automatically satisfied if $g$ is strictly monotone, such as the function  $g(t) = e^{2t}$  used in \eqref{y with g=e2t eq}. It is also satisfied for the choice $g(t)=t^2$. Indeed, we have 
	$\int_0^1\big(g'(W^H_s)\big)^2\,ds=4\int_0^1(W^H_s)^2\,ds>0$ $\bP$-a.s.,
	because $\{(s,\om):W^H_s(\om)=0\}$ is a $\rm{Leb}[0,1]\otimes\bP$-null set, and so \eqref{eq positive}  follows by way of the absolute continuity of the law of $X$ established in \cite{HanSchiedGirsanov}. 
\end{remark}

It is also worthwhile to point out that the proof of \cite[Theorem 2.7]{HanSchiedDerivative} relies essentially on the Gaussianity of the antiderivative $Y$. Hence, its approach does not extend to the setting of \Cref{thm main}, where the process $g \circ X$ is no longer Gaussian, and so neither is the process $Y$. Instead, \Cref{thm main} is established by a pathwise approach, which is, in fact, robust and applicable to a wide range of rough volatility models. As will become clear in the proof, the convergence rate in \Cref{thm main} depends only on the convergence rate of $\cR^s_n(Y)$ (see \eqref{eq rate of convergence Y}) and the Hölder continuity of the process $X$, or equivalently, of the fractional Brownian motion $W^H$. In many rough volatility models, such as the rough Bergomi model \cite{Gatheral2018rough}, the process $X$ is modeled by a Gaussian Volterra process, wheares the function $g$ remains to be exponential. In this setting, the convergence rate in \eqref{eq rate of convergence Y} can be deduced by exploring the covariance structure of $Y$; see, e.g., \cite[Proposition 2.7]{HanSchiedDerivative}.

A potential concern is that the convergence rate of our estimator depends on the Hurst parameter $H$ itself. In particular, for very small values of $H$, such as $H \approx 0.1$, commonly considered in the rough volatility literature, large sample sizes might be required for an accurate estimation. In contrast, the following numerical study shows that the finite-sample performance at $H = 0.1$ is in fact better then the asymptotic rate \eqref{eq rate} would suggest. Even at $n = 10$, or equivalently, with $2^{12}$ observations, the sequential scale estimator $\cR^s_n$ already gives very reliable estimates. 

\begin{figure}[H]
	\centering
	\includegraphics[width=7.5cm]{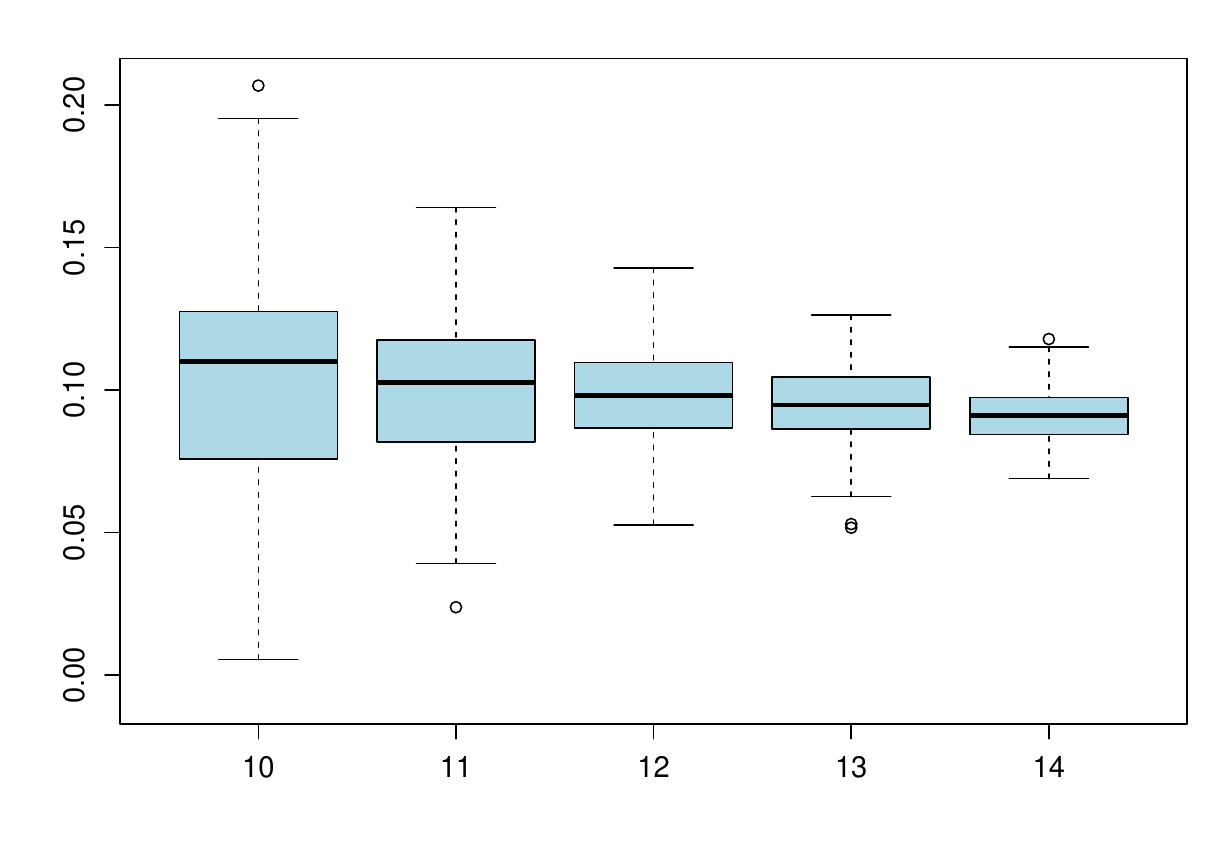}
	\,	\includegraphics[width=7.5cm]{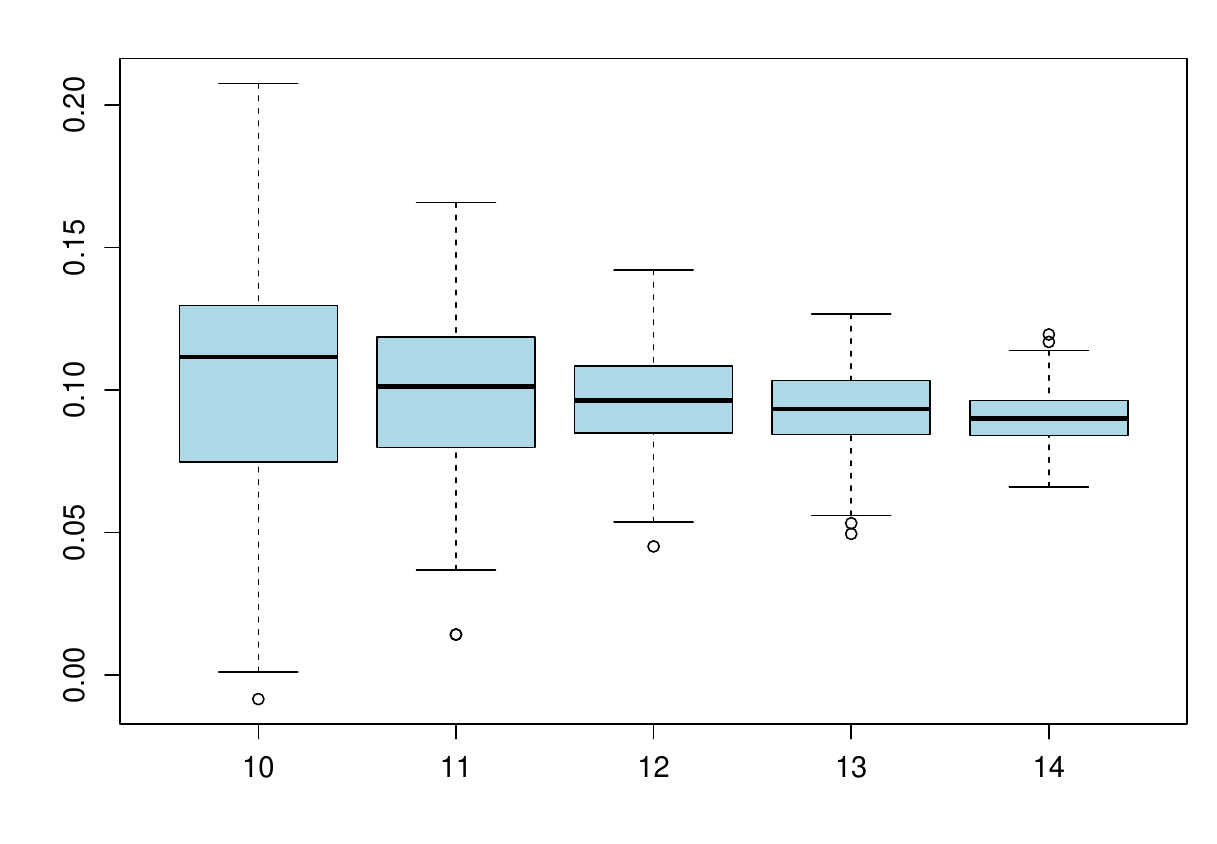}
	\caption{Box plots of the sequential scale estimates $ \cR^s_n(Y)$ for $n=10,\dots, 14$, based on 200 sample paths of fractional Ornstein--Uhlenbeck process $X^H$ with $g(t) = \exp(2t)$ (left) and $g(t) = (t- 2)^2 + \sin(2\pi t)$ (right). The other parameters are chosen to be $x_0 = 2$, $\rho = 0.2$, $\mu = 2$, $m = 3$ and $\alpha_k = 1$ for $k = 0,1,2,3$.} 
	\label{fig:Box Ori si}
\end{figure}

\section{Proof of Theorem 1.2}\label{sec proof}
To prove \Cref{thm main}, we adopt the notation used in \cite{HanSchiedDerivative}. For any given deterministic function or random process $f$, we write 
\begin{equation}\label{coefficients depending on f eq}
	\begin{split}
		\theta^f_{n,k}& = 2^{n/2}\left(2f\Big(\frac{2k+1}{2^{n+1}}\Big)-f\Big(\frac{k}{2^{n}}\Big) - f\Big(\frac{k+1}{2^{n}}\Big)\right), \\
		\vartheta^f_{n,k}&= 2^{3n/2+3}\left(f\Big(\frac{4k}{2^{n+2}}\Big)-2f\Big(\frac{4k+1}{2^{n+2}}\Big)+2f\Big(\frac{4k+3}{2^{n+2}}\Big)-f\Big(\frac{4k+4}{2^{n+2}}\Big)\right).
	\end{split}
\end{equation}
Here, the coefficients $(\theta^f_{n,k})$ are also referred to as the Faber--Schauder coefficients of $f$, and coefficients $(\vartheta^f_{n,k})$ are the approximated Faber--Schauder coefficients \eqref{eq vartheta} with respect to $f$ as in \cite[Theorem 2.1]{HanSchiedMatrix}. Furthermore, for a given function or process $f$, we write $\bar{\bm \vartheta}^f_n := \big(\vartheta^f_{n,0},\vartheta^f_{n,1},\dots,\vartheta^f_{n,2^n-1}\big)^\top$. In particular, if $f$ is a Gaussian process, $\bar{\bm \vartheta}^f_n$ then defines a Gaussian vector. In this section, we let
\begin{equation*}
	Y_t := \int_0^t W^H_s\,ds \quad \text{and} \quad V_t = \int_{0}^{t} g\big(W^H_s\big) \,ds
\end{equation*}
be the antiderivative of $W^H$ and $g \circ W^H$ respectively. We denote the covariance matrix of the Gaussian vector $\bar{\bm \vartheta}^{Y}_n$ by $\Psi_n$, and we fix $\tau_H := \trace{\Psi_0}$. It was shown in \cite[Proposition 4.9]{HanSchiedDerivative} that there exists a positive constant $c_H > 0$ such that
\begin{equation}\label{eq old convergence}
	\limsup_{n \ua \infty}\delta_n^{-1}\left|2^{n(H-1)}\norm{\frac{\bar{\bm \vartheta}^Y_n}{\sqrt{\tau_H}}}_{\ell_2} - 1\right| \le 1 \qquad \bP\text{-}a.s.
\end{equation}
for $\delta_n = c_H\cdot2^{-n/2}\sqrt{\log n}$. For $n \in \bN$, we now denote 
\begin{equation*}
	\bar {\bm \vartheta}^Y_{2n,i}:= \left(\vartheta_{2n,2^ni},\vartheta_{2n,2^ni+1},\cdots, \vartheta_{2n,2^n(i+1)-1}\right)^\top, \qquad 0 \le i \le 2^n-1.
\end{equation*}
In other words, the vectors $\left(\bar {\bm \vartheta}^Y_{2n,i}\right)$ divide the Gaussian vector $\bar{\bm \vartheta}^Y_{2n}$ into $2^n$ equally partitioned subvectors.

The proof of \Cref{thm main} results from a sequence of intermediate lemmas, which we summarize below to outline the roadmap of this proof.
\begin{itemize}
	\item First, \Cref{lemma uniform} obtains the uniform almost sure rate of convergence of the Gaussian vectors $(\bar {\bm \vartheta}^Y_{2n,i})$. Furthermore, \Cref{remark pathwise} transfers this convergence result to two-sided bounds for the $\ell_2$-norm of $\bar {\bm \vartheta}^Y_{2n,i}$ in a strictly pathwise sense.
	\item Second, \Cref{lemma bound} shows that the two-sided bounds in \Cref{remark pathwise} would carry over from the $\ell_2$-norm of $\bar {\bm \vartheta}^Y_{2n,i}$ to that of $\bar {\bm \vartheta}^V_{2n,i}$.
	\item Next, \Cref{lemma bound 2} derives the two-sided bounds for the $\ell_2$-norm of $\bar {\bm \vartheta}^V_{2n}$ based on the bounds in \Cref{lemma bound}.
	\item Finally, \Cref{lemma convergence} applies the two-sided bounds in \Cref{lemma bound 2} to obtain the almost sure convergence rate of $\cR_{2n}^s(V)$.
\end{itemize}

Let us begin by considering the uniform convergence rate of the Gaussian vectors $\left(\bar {\bm \vartheta}^Y_{2n,i}\right)$ similar to \eqref{eq old convergence}. 
\begin{lemma}\label{lemma uniform}
	For $H \in (0,1)$, there exists a constant $c_H > 0$ such that 
	\begin{equation*}
		\lim_{n \ua \infty}\delta_n^{-1}\sup_{0 \le i \le 2^n-1}\left|2^{n(2H-3/2)}\norm{\frac{\bar{\bm \vartheta}_{2n,i}}{\sqrt{\tau_H}}}_{\ell_2} - 1\right| \le 1 \qquad \bP\text{-}a.s.
	\end{equation*}
	for $\delta_n = c_H \cdot 2^{-n/2}\sqrt{n\log 2  + 2\log n}$.
\end{lemma}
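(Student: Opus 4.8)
The plan is to upgrade the single almost-sure estimate \eqref{eq old convergence} to a uniform-in-$i$ statement by combining the scaling/self-similarity structure of the subvectors $\bar{\bm\vartheta}^Y_{2n,i}$ with a Borel--Cantelli argument. The starting observation is that, for each fixed $i$, the subvector $\bar{\bm\vartheta}^Y_{2n,i}$ is obtained from $y$ restricted to the dyadic subinterval $[i2^{-n},(i+1)2^{-n}]$, rescaled. By the (approximate) self-similarity of fractional Brownian motion — more precisely, that $W^H_{i2^{-n}+2^{-n}u}-W^H_{i2^{-n}}$ has the same law as $2^{-nH}W^H_u$ — the rescaled subvector $2^{n(H-1)}\bar{\bm\vartheta}^Y_{2n,i}/\sqrt{\tau_H}$ has, up to the correct power of $2$, the same distribution as $\bar{\bm\vartheta}^Y_n/\sqrt{\tau_H}$ associated with an antiderivative of a fractional Brownian motion on $[0,1]$; the extra index shift between $2n$ and $n$ accounts for the exponent $2H-3/2$ appearing in the statement (since $\bar{\bm\vartheta}^Y_{2n,i}$ lives at generation $2n$ but only $n$ extra dyadic levels are ``used'' inside the subinterval). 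I would first make this distributional identification precise, so that each term $Z_{n,i}:=\big|2^{n(2H-3/2)}\|\bar{\bm\vartheta}^Y_{2n,i}/\sqrt{\tau_H}\|_{\ell_2}-1\big|$ is equal in law to the $n$-th term in \eqref{eq old convergence}.

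Next I would obtain a quantitative tail bound for $Z_{n,i}$. The quantity inside the norm is a (normalized) $\ell_2$-norm of a Gaussian vector, hence a Lipschitz functional of a Gaussian vector; the Gaussian concentration inequality gives $\bP(Z_{n,i}>t)\le 2\exp(-c\,2^n t^2)$ for a constant depending only on $H$ (the factor $2^n$ coming from the variance normalization, which is exactly what produced $\delta_n\sim 2^{-n/2}$ in the single-index case). Actually, since \eqref{eq old convergence} was proved in \cite{HanSchiedDerivative}, the cleanest route is to extract from its proof the sub-Gaussian estimate it already uses, rather than re-deriving concentration from scratch. Either way, one arrives at a bound of the form $\bP(Z_{n,i}>\delta_n)\le 2\exp(-c_H^2\,2^n\delta_n^2/C)$.

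Then the union bound over $0\le i\le 2^n-1$ gives
\[
\bP\Big(\sup_{0\le i\le 2^n-1}Z_{n,i}>\delta_n\Big)\le 2^{n+1}\exp\!\big(-c_H^2\,2^n\delta_n^2/C\big).
\]
With the choice $\delta_n=c_H\cdot 2^{-n/2}\sqrt{n\log 2+2\log n}$ one has $2^n\delta_n^2=c_H^2(n\log 2+2\log n)$, so — after absorbing constants into $c_H$, which is exactly why the constant in the statement is allowed to be model-dependent — the right-hand side is summable in $n$ (it behaves like $2^{n+1}\cdot 2^{-n}n^{-2}\asymp n^{-2}$). By Borel--Cantelli, $\bP$-a.s.\ only finitely many of the events $\{\sup_i Z_{n,i}>\delta_n\}$ occur, which is precisely the claimed bound $\limsup_n \delta_n^{-1}\sup_i Z_{n,i}\le 1$.

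The main obstacle I anticipate is the first step: making the distributional identification between the subvector $\bar{\bm\vartheta}^Y_{2n,i}$ and the generation-$n$ vector $\bar{\bm\vartheta}^Y_n$ genuinely exact, including the correct bookkeeping of exponents ($2H-3/2$ versus $H-1$) and of the trace normalization $\tau_H$. The subtlety is that $Y=\int_0^\cdot W^H_s\,ds$ on the subinterval $[i2^{-n},(i+1)2^{-n}]$ is not simply a rescaled copy of $Y$ on $[0,1]$: the antiderivative picks up a linear term coming from the value $W^H_{i2^{-n}}$ at the left endpoint. However, $\vartheta^f_{n,k}$ in \eqref{coefficients depending on f eq} is a second-order difference operator, so it annihilates affine functions of $t$; hence that linear term drops out and the identification in law goes through after the Brownian scaling. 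I would state this carefully as the key lemma-within-the-proof, since it is the one place where the structure of $\vartheta$ (as opposed to generic Gaussian concentration) is really used.
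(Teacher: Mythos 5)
Your proposal is correct and follows essentially the same route as the paper: self-similarity and stationarity of increments identify each block $\bar{\bm\vartheta}^Y_{2n,i}$ (equivalently, its covariance $\Phi_{2n,i}=2^{(1-2H)n}\Psi_n$) with a rescaled copy of the generation-$n$ vector, a Gaussian concentration inequality for the $\ell_2$-norm (with rate controlled by the operator norm of the covariance, as in the proof of \eqref{eq old convergence}) gives the tail bound $\exp(-c\,2^n\delta^2)$, and the union bound over the $2^n$ blocks followed by Borel--Cantelli yields the claim with the stated $\delta_n$. Your observation that the second-order difference annihilates the affine term produced by the left-endpoint value of $W^H$ is exactly the point that makes the blockwise identification exact.
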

\begin{proof}
	Let us denote the covariance matrix of the Gaussian vector $\bar{\bm \vartheta}_{2n,i}$ by $\Phi_{2n,i}$. By definition, the matrix $\Phi_{2n,i}$ is the $i^{\text{th}}$ diagonal partitioned matrix of $\Psi_{2n}$. As the fractional Brownian motion $W^H$ is self-similar and admits stationary increments, we have 
	\begin{equation}\label{eq matrix}
		\Phi_{2n,i} =  2^{(1-2H)n}\Psi_n.
	\end{equation}
	This then gives $$\trace{\Phi_{2n,i}} = 2^{(1-2H)n}\trace{\Psi_n} = 2^{(3-4H)n}\tau_H.$$ Furthermore, applying \cite[Proposition 4.9]{HanSchiedDerivative} to \eqref{eq matrix} yields 
	\begin{equation*}
		\norm{\Phi_{2n,i}}_2 =  2^{(1-2H)n}\norm{\Psi_n}_2 \le \kappa_H 2^{n(2-4H)}, \qquad 0 \le i \le 2^n-1,
	\end{equation*}
	for some $\kappa_H > 0$. For any given $\delta > 0$, it follows from the concentration inequality \cite[Lemma 3.1]{BaudoinHairer} that 
	\begin{align*}
		\MoveEqLeft \bP\left(\sup_{0 \le i \le 2^n-1}\left|2^{n(2H-\frac{3}{2})}\norm{\frac{\bar{\bm \vartheta}_{2n,i}}{\sqrt{\tau_H}}}_{\ell_2}-1\right| \ge \delta\right)\\ &= \bP\left(\sup_{0 \le i \le 2^n-1}\left|2^{n(2H-\frac{3}{2})}\norm{\bar{\bm \vartheta}_{2n,i}}_{\ell_2}-\sqrt{\tau_H}\right| \ge \delta\sqrt{\tau_H}\right) \\&= \bP\left(\bigcup_{i = 0}^{2^n-1}\left\{\left|2^{n(2H-\frac{3}{2})}\norm{\bar{\bm \vartheta}_{2n,i}}_{\ell_2}- \sqrt{\tau_H}\right| \ge \delta\sqrt{\tau_H} \right\}\right) \\ &\le \sum_{i = 0}^{2^n-1}\bP\left(\left|2^{n(2H-\frac{3}{2})}\norm{\bar{\bm \vartheta}_{2n,i}}_{\ell_2}- \sqrt{\tau_H}\right| \ge \delta \sqrt{\tau_H}\right)\\&= \sum_{i = 0}^{2^n-1}\bP\left(\left|\norm{\bar{\bm \vartheta}_{2n,i}}_{\ell_2}- \sqrt{\trace{\Phi_{2n,i}}}\right| \ge 2^{n(\frac{3}{2} - 2H)}\delta \sqrt{\tau_H}\right)  \\&\le\sum_{i = 0}^{2^n-1} \phi \exp\left(-\frac{2^{n(3-4H)}\delta^2\tau_H}{4\norm{\Phi_{2n,i}}_2}\right) =2^n\left(\phi  \exp\left(-\frac{2^{n(3-4H)}\delta^2\tau_H}{4\norm{\Phi_{2n,0}}_2}\right)\right) \\&= \phi \exp\left(-\frac{2^{n(3-4H)}\delta^2\tau_H}{4\norm{\Phi_{2n,0}}_2}+n\log2\right) \le \phi \exp\left(\frac{2^{-n}\delta^2\tau_H}{4\kappa_H} + n\log2\right),
	\end{align*}
	for some constant $\phi > 0$. We now take $c_H := \sqrt{4\kappa_H/\tau_H}$ and $\delta_n := c_H\cdot 2^{-n/2}\big(2\log n + n \log 2\big)^{-1/2}$. For each $n \in \bN$, plugging $\delta = \delta_n$ into the above inequality yields that 
	\begin{equation}\label{eq borel cantelli}
		\bP\left(\sup_{0 \le i \le 2^n-1}\left|2^{n(2H-\frac{3}{2})}\norm{\frac{\bar{\bm \vartheta}_{2n,i}}{\sqrt{\tau_H}}}_{\ell_2}-1\right| \ge \delta_n\right) \le \frac{\phi}{n^2}.
	\end{equation}
	As the expression on the right-hand side of \eqref{eq borel cantelli} is absolutely summable, the Borel--Cantelli lemma yields that 
	\begin{equation}\label{eq null set}
		\lim_{n \ua \infty}\delta_n^{-1}\sup_{0 \le i \le 2^n-1}\left|2^{n(2H-\frac{3}{2})}\norm{\frac{\bar{\bm \vartheta}_{2n,i}}{\sqrt{\tau_H}}}_{\ell_2} - 1\right| \le 1 \qquad \bP\text{-}a.s.
	\end{equation}
	This completes the proof.
\end{proof}
We now start our pathwise analysis to obtain the bounds for the $\ell_2$-norm of $\bar {\bm \vartheta}^Y_{2n,i}$. To this end, let us first of all clarify the notation we are going to use in the following proofs. We fix $g \in C^2(\bR)$ and let $x$ be a typical sample path of $W^H$. Here, we refer to the sample paths that do not belong to the null set as typical sample paths. We let
\begin{equation*}
	y(t) := \int_{0}^{t}x(s)\,ds, \qquad	u(t)=g(x(t))\quad\text{and}\quad v(t)=\int_0^tu(s)\,ds=\int_0^tg(x(s))\,ds.
\end{equation*}
Using these notations, we can rephrase \Cref{lemma uniform} in the following strictly pathwise manner. 
\begin{remark}\label{remark pathwise}
	It follows from \Cref{lemma uniform} that for a typical sample path $x$ of fractional Brownian motion $W^H$, there exist $n_x \in \bN$ and a positive constant $c_H > 0$ such that for $n \ge n_x$, we have
	\begin{equation}\label{eq pathwise}
		(1 - \delta_n)^2 \tau_H \le2^{n(4H-3)} \norm{\bar{\bm \vartheta}^y_{2n,i}}^2_{\ell_2} \le (1 + \delta_n)^2 \tau_H,
	\end{equation}
	for $\delta_n = c_H \cdot 2^{-n/2}\sqrt{n\log 2  + 2\log n}$ and all $0 \le i \le 2^n - 1$. Here, the collection of typical sample paths consists of sample paths that are continuous and satisfy the convergence rate \eqref{eq null set}.
\end{remark}

The following lemma shows that the two-sided bounds \eqref{eq pathwise} can be carried over from the sample path $y$ to the sample path $v$.

\begin{lemma}\label{lemma bound}
	Let $x$ be a typical sample path of $W^H$. Then, there exist $n_x \in \bN$, $c_H > 0$ and intermediate times $\tau^a_{n,i}, \tau^b_{n,i} \in [2^{-n}i,2^{-n}(i+1)]$ such that for $n \ge n_x$, we have
	\begin{equation*}
		\begin{split}
			&\left(\big(g'(x(\tau^a_{n,i}))\big)^2(1 - \delta_n)^2-2^{-5Hn/4}\big|g'(x(\tau^b_{n,i}))\big|\right) \tau_H \le2^{n(4H-3)} \norm{\bar{\bm \vartheta}^v_{2n,i}}^2_{\ell_2}\\  & \le \left(\big(g'(x(\tau^b_{n,i}))\big)^2(1 + \delta_n)^2+2^{-5Hn/4}\big|g'(x(\tau^b_{n,i}))\big|\right)  \tau_H,
		\end{split}
	\end{equation*}
	for $\delta_n = c_H \cdot 2^{-n/2}\sqrt{n\log 2  + 2\log n}$ and all $0 \le i \le 2^n - 1$.
\end{lemma}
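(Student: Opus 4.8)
The plan is to compare $\norm{\bar{\bm \vartheta}^v_{2n,i}}_{\ell_2}$ with $\norm{\bar{\bm \vartheta}^y_{2n,i}}_{\ell_2}$, to which the pathwise bounds \eqref{eq pathwise} of \Cref{remark pathwise} already apply, by a pointwise first-order Taylor expansion of $g$ along $x$. Fix a typical sample path $x$ of $W^H$; after enlarging the exceptional null set we may assume in addition that $x$ is $\gamma$-Hölder on $[0,1]$ for a fixed exponent $\gamma\in(\tfrac{7H}{8},H)$, and we write $\|x\|_\gamma:=\|x\|_{C^\gamma[0,1]}<\infty$ and $M:=\sup_{z\in x([0,1])}|g''(z)|<\infty$. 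For $2^ni\le k\le 2^n(i+1)-1$, a short telescoping of the definition \eqref{coefficients depending on f eq} shows that $\vartheta^v_{2n,k}$ equals $2^{3n+3}$ times a signed sum of the integrals of $u=g\circ x$ over the three subintervals of $I_{2n,k}:=[k2^{-2n},(k+1)2^{-2n}]$, the signed lengths of which sum to zero. Hence, choosing a reference point $\tau_{2n,k}\in I_{2n,k}$ and writing $g(x(s))=g(x(\tau_{2n,k}))+g'(x(\tau_{2n,k}))(x(s)-x(\tau_{2n,k}))+R_{2n,k}(s)$ with $|R_{2n,k}(s)|\le\tfrac12 M\,(x(s)-x(\tau_{2n,k}))^2$, the constant term and the $x(\tau_{2n,k})$-part of the linear term are annihilated, leaving the exact identity
\[
\vartheta^v_{2n,k}=g'\bigl(x(\tau_{2n,k})\bigr)\,\vartheta^y_{2n,k}+E_{2n,k},
\]
where $E_{2n,k}$ is $2^{3n+3}$ times the corresponding signed sum of the integrals of $R_{2n,k}$. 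I would take $\tau^a_{n,i}$ to be the $\tau_{2n,k}$ minimizing $|g'(x(\tau_{2n,k}))|$ over $k$, and $\tau^b_{n,i}:=\argmax_{\tau\in[2^{-n}i,2^{-n}(i+1)]}|g'(x(\tau))|$; both lie in $[2^{-n}i,2^{-n}(i+1)]$, and $|g'(x(\tau^a_{n,i}))|\le|g'(x(\tau^b_{n,i}))|$ while $\sup_{I_{2n,k}}|g'\circ x|\le|g'(x(\tau^b_{n,i}))|$ for every $k$ in the block.

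The key technical point is to estimate the error vector $\bm E_{n,i}:=(E_{2n,k})_k$ in \emph{two} complementary ways. The bound $|R_{2n,k}(s)|\le\tfrac12 M\|x\|_\gamma^2 2^{-4n\gamma}$ (from the oscillation of $x$ over $I_{2n,k}$) together with total integration length $2^{-2n}$ gives the ``second-order'' bound $|E_{2n,k}|\le 4M\|x\|_\gamma^2\,2^{n(1-4\gamma)}$. Bounding instead $|\vartheta^v_{2n,k}|$ and $|\vartheta^y_{2n,k}|$ crudely by the oscillations of $u$ and $x$ over $I_{2n,k}$ gives the ``first-order'' bound $|E_{2n,k}|\le 16\bigl(\sup_{I_{2n,k}}|g'\circ x|\bigr)\|x\|_\gamma\,2^{n(1-2\gamma)}\le 16|g'(x(\tau^b_{n,i}))|\,\|x\|_\gamma\,2^{n(1-2\gamma)}$. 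The geometric mean of the two bounds yields $E_{2n,k}^2\le C\,|g'(x(\tau^b_{n,i}))|\,2^{n(2-6\gamma)}$, hence, summing over the $2^n$ indices, $\norm{\bm E_{n,i}}^2_{\ell_2}\le C\,|g'(x(\tau^b_{n,i}))|\,2^{n(3-6\gamma)}$, while the pure second-order bound gives $\norm{\bm E_{n,i}}_{\ell_2}\le C\,2^{n(3/2-4\gamma)}$ (with $C$ a path-dependent constant). Combining these with $\norm{\bar{\bm \vartheta}^y_{2n,i}}_{\ell_2}\le(1+\delta_n)\sqrt{\tau_H}\,2^{n(3-4H)/2}$ from \eqref{eq pathwise}, and using that $\gamma>\tfrac{7H}{8}$ forces both $2H-4\gamma<-\tfrac{5H}{4}$ and $4H-6\gamma<-\tfrac{5H}{4}$, one checks that for all large $n$
\[
2^{n(4H-3)}\norm{\bar{\bm \vartheta}^y_{2n,i}}_{\ell_2}\norm{\bm E_{n,i}}_{\ell_2}\le \tfrac14\,2^{-5Hn/4}\tau_H \quad\text{and}\quad 2^{n(4H-3)}\norm{\bm E_{n,i}}^2_{\ell_2}\le \tfrac14\,|g'(x(\tau^b_{n,i}))|\,2^{-5Hn/4}\tau_H ,
\]
where the path-dependent constants are absorbed by possibly enlarging $n_x$ (and $c_H$ may be kept as in \Cref{remark pathwise}).

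Finally I would assemble the estimates. Expanding the square, $\norm{\bar{\bm \vartheta}^v_{2n,i}}^2_{\ell_2}=\sum_k\bigl(g'(x(\tau_{2n,k}))\vartheta^y_{2n,k}+E_{2n,k}\bigr)^2$; for the lower bound use $(a+b)^2\ge a^2-2|a||b|$ together with $\sum_k (g'(x(\tau_{2n,k})))^2(\vartheta^y_{2n,k})^2\ge (g'(x(\tau^a_{n,i})))^2\norm{\bar{\bm \vartheta}^y_{2n,i}}^2_{\ell_2}$ and the Cauchy--Schwarz estimate $\sum_k|g'(x(\tau_{2n,k}))||\vartheta^y_{2n,k}||E_{2n,k}|\le|g'(x(\tau^b_{n,i}))|\,\norm{\bar{\bm \vartheta}^y_{2n,i}}_{\ell_2}\norm{\bm E_{n,i}}_{\ell_2}$; multiplying through by $2^{n(4H-3)}$ and invoking \eqref{eq pathwise} and the first display above yields the claimed lower bound. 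For the upper bound, use $(a+b)^2\le a^2+2|a||b|+b^2$ and $\sum_k(g'(x(\tau_{2n,k})))^2(\vartheta^y_{2n,k})^2\le (g'(x(\tau^b_{n,i})))^2\norm{\bar{\bm \vartheta}^y_{2n,i}}^2_{\ell_2}$, and the extra term $\norm{\bm E_{n,i}}^2_{\ell_2}$ is absorbed via the second display. I expect this last term to be the main obstacle: the naive second-order estimate of $\norm{\bm E_{n,i}}^2_{\ell_2}$ carries no factor $|g'(x(\tau^b_{n,i}))|$ and is too weak precisely when $g'\circ x$ is very small on $[2^{-n}i,2^{-n}(i+1)]$ (near a zero of $g'\circ x$ it can be of order $2^{-n}$), so one must interpolate it against the crude first-order estimate; the resulting exponent arithmetic is tight enough that $\gamma$ must be taken above $\tfrac{7H}{8}$, not merely below $H$.
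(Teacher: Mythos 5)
Your proposal is correct, and its skeleton is the same as the paper's: both rest on the exact decomposition $\vartheta^v_{2n,k}=g'(x(\tau_{2n,k}))\vartheta^y_{2n,k}+E_{2n,k}$ with a remainder that is quadratic in the increments of $x$ (the paper imports this as \cite[Equation (4.5)]{HanSchiedDerivative} with mean-value points $\tau^\sharp,\tau^\flat$ and a $g''$-factor, whereas you rederive it by Taylor expansion about a reference point — the annihilation argument via the signed integral combination is exactly right), both sandwich the weighted sum $\sum_k (g'(x(\tau_{2n,k})))^2(\vartheta^y_{2n,k})^2$ between the extremes of $(g'\circ x)^2$ over the block and invoke \Cref{remark pathwise}, and both end up with the Hölder exponent pinned just above $\tfrac{7H}{8}$ by the same exponent arithmetic. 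The one place where you genuinely diverge — the geometric-mean interpolation between the ``second-order'' bound $|E_{2n,k}|\lesssim 2^{n(1-4\gamma)}$ and the ``first-order'' bound $|E_{2n,k}|\lesssim |g'(x(\tau^b_{n,i}))|2^{n(1-2\gamma)}$ — is in fact a strengthening of the paper's argument. The paper bounds $\sum_k(g'')^2(\wt\zeta)^2$ by $\kappa_xc_x^22^{-3Hn}$ with \emph{no} factor of $|g'(x(\tau^b_{n,i}))|$, and then asserts, on the grounds that $\delta_n\downarrow0$ and $g'\circ x$ is continuous, that this is eventually dominated by $2^{-5Hn/4}|g'(x(\tau^b_{n,i}))|$; that step is not justified uniformly in $i$ when $g'\circ x$ is very small on a block (e.g.\ near a flat zero of $g'\circ x$, where $\max_{[2^{-n}i,2^{-n}(i+1)]}|g'\circ x|$ can decay faster than $2^{-7Hn/4}$). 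Your interpolation supplies the missing factor $|g'(x(\tau^b_{n,i}))|$ intrinsically and thus yields the stated block-wise two-sided bound without any implicit lower bound on $g'\circ x$; this is exactly the obstacle you identified at the end, and your fix is the right one. (It is worth noting that for the downstream \Cref{lemma bound 2} only the aggregate over $i$ matters, so the paper's conclusion survives either way, but as a proof of the lemma as stated yours is the more watertight.)
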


\begin{proof}
	To prove this lemma, we let $\theta_{m,k}^f(s):=\theta_{m,k}^{f(s+\cdot)}$. That is, $\theta_{m,k}^f(s)$ are the Faber--Schauder coefficients \eqref{coefficients depending on f eq} of the function $t\mapsto f(s+t)$ for given $s\ge0$. One can avoid undefined arguments of functions in case $s+t>1$ by setting $f(t):= f(t \wedge 1)$ for $t \ge 1$. Furthermore, we let 
	\begin{equation*}
		\zeta^x_{n+1,2k}(s):= 2^{(n+1)/2}\left(x\Big(\frac{4k+2}{2^{n+2}}+s\Big)-x\Big(\frac{4k}{2^{n+2}}+s\Big)\right)\big(x(\tau_{n+2,4k}(s))-x(\tau_{n+2,4k+1}(s))\big)
	\end{equation*}
	and 
	\begin{equation*}
		\wt \zeta^x_{n+1,2k}:= 2^{n+5/2}\int_{0}^{2^{-n-1}} \zeta^{x}_{n+1,2k}(s)\,ds,
	\end{equation*}
	where  $\tau_{n+2,k}(s) \in [2^{-n-2}k+s, 2^{-n-2}(k+1)+s]$ are certain intermediate times such  that for $s \in [0,2^{-n-1}]$. It follows from \cite[Equation (4.5)]{HanSchiedDerivative} that for $0 \le k \le 2^{2n}-1$, we have
	\begin{equation}\label{three terms eq}
		\begin{split}
			\big(	\vartheta^v_{2n,k} \big)^2
			& = \big(g'\big(x(\tau^\sharp_{2n+1,2k})\big)\big)^2\big(\vartheta^y_{2n,k}\big)^2 + \left(g''\big(x(\tau^\flat_{2n+1,2k})\big)\right)^2\big(\wt\zeta^x_{2n+1,2k}\big)^2\\
			&\qquad + 2g'\big(x(\tau^\sharp_{2n+1,2k})\big)g''\big(x(\tau^\flat_{2n+1,2k})\big)\vartheta^y_{2n,k}\wt\zeta^x_{2n+1,2k}.	\end{split}
	\end{equation}
	for intermediate times $\tau^\sharp_{2n+1,k}, \tau^\flat_{2n+1,k} \in [2^{-2n-1}{k},2^{-2n-1}(k+1)]$. It remains to compute the contribution of each term in \eqref{three terms eq}. For simplicity, we will consider the special case $i = 0$, and analogous computations can be done for $0 \le i \le 2^n-1$. First, as $x$ is $\alpha$-H\"{o}lder continuous for $\alpha < H$, then there exists $c_x > 0$ such that 
	\begin{equation*}
		|x(\tau_{n+2,4k}(s))-x(\tau_{n+2,4k+1}(s))| \le c_x|\tau_{n+2,4k}(s) - \tau_{n+2,4k+1}(s)|^\alpha \le c_x2^{-\alpha n}.
	\end{equation*}
	The same argument also leads to 
	\begin{equation*}
		\left|x\Big(\frac{2k+1}{2^{n+1}}+s\Big)-x\Big(\frac{2k}{2^{n+1}}+s\Big)\right| \le c_x 2^{-\alpha n}
	\end{equation*}
	for $s \in [0,1]$. The above inequalities then lead to
	\begin{equation*}
		|\zeta^x_{n+1,2k}(s)| \le c_x^2 2^{(\frac{1}{2}-2\alpha)n+\frac{1}{2}} \quad \text{and} \quad  |\wt\zeta^x_{n+1,2k}| \le c_x^2 2^{(\frac{1}{2}-2\alpha)n+2}.
	\end{equation*}
	Furthermore, as $g \in C^2(\bR)$, there exists $\kappa_x > 0$ such that $32 (g''(x(s)))^2 \le \kappa_x$ for all $s\in[0,1]$. Then,
	\begin{equation}\label{eq contribution 1}
		2^{(4H-3)n}\sum_{k = 0}^{2^{n}-1}\left(g''\big(x(\tau^\flat_{2n+1,2k})\big)\right)^2\left(\wt\zeta^x_{2n+1,2k}
		\right)^2 \le \kappa_x c_x^2 2^{(4H - 8\alpha)n}.
	\end{equation}
	In addition, as $g \in C^2(\bR)$ and $x$ is continuous, the intermediate value theorem yields the existence of intermediate times $\tau^a_{n,0}, \tau^b_{n,0} \in [0,2^{-n}]$ such that
	\begin{equation}\label{eq contribution 3}
		\scalemath{0.95}{\big(g'(x(\tau^a_{n,0}))\big)^2\sum_{k = 0}^{2^n-1}\big(\vartheta^y_{2n,k}\big)^2 \le \sum_{k = 0}^{2^n-1} \big(g'\big(x(\tau^\sharp_{2n+1,2k})\big)\big)^2\big(\vartheta^y_{2n,k}\big)^2 \le \big(g'(x(\tau^b_{n,0}))\big)^2\sum_{k = 0}^{2^n-1}\big(\vartheta^y_{2n,k}\big)^2.}
	\end{equation}
	Applying \eqref{eq pathwise} then yields the existence of $n_{1,x} \in \bN$ such that for $n \ge n_{1,x}$,
	\begin{equation*}
		\begin{split}
			\big(g'(x(\tau^a_{n,0}))\big)^2  (1 - \delta_n)^2 \tau_H &\le 	2^{n(4H-3)} \sum_{k = 0}^{2^n-1} \big(g'\big(x(\tau^\sharp_{2n+1,2k})\big)\big)^2\big(\vartheta^y_{2n,k}\big)^2 \\&\le \big(g'(x(\tau^b_{n,0}))\big)^2  (1 + \delta_n)^2 \tau_H.
		\end{split}
	\end{equation*}
	Finally, by the Cauchy--Schwarz inequality, for $n \ge n_{1,x}$, we have 
	\begin{equation}\label{eq contribution 2}
		\begin{split}
			&2^{(4H-3)n}\left|\sum_{k = 0}^{2^n-1}g'\big(x(\tau^\sharp_{2n+1,2k})\big)g''\big(x(\tau^\flat_{2n+1,2k})\big)\vartheta^y_{2n,k}\wt\zeta^x_{2n+1,2k}\right|\\ \le & \,c_x \sqrt{\kappa_x\tau_H}(1 + \delta_n)2^{(2H - 4\alpha)n}\big|g'(x(\tau^b_{n,0}))\big|.
		\end{split}
	\end{equation}
	Taking $\alpha = \frac{7}{8}H$ in \eqref{eq contribution 1} and \eqref{eq contribution 2} gives 
	\begin{equation*}
		\scalemath{0.9}{	\begin{split}
				&\quad2^{(4H-3)n}\left(2\left|\sum_{k = 0}^{2^n-1}g'\big(x(\tau^\sharp_{2n+1,2k})\big)g''\big(x(\tau^\flat_{2n+1,2k})\big)\vartheta^y_{2n,k}\wt\zeta^x_{2n+1,2k}\right| + \sum_{k = 0}^{2^n-1}\left(g''\big(x(\tau^\flat_{2n+1,2k})\big)\right)^2\left(\wt\zeta_{2n+1,2k}
				\right)^2\right)\\&\le \kappa_x c_x^2 2^{-3Hn} + c_x \sqrt{\kappa_x\tau_H}(1 + \delta_n)2^{-3Hn/2+1}\big|g'(x(\tau^b_{n,0}))\big| .
		\end{split}}
	\end{equation*}
	As $\delta_n \da 0$ as $n \ua \infty$ and $g'\circ x$ is continuous, there exists $n_{2,x} \in \bN$ such that for $n \ge n_{2,x}$,
	\begin{equation*}
		\kappa_x c_x^2 2^{-3Hn} + c_x \sqrt{\kappa_x\tau_H}(1 + \delta_n)2^{-3Hn/2+1}\big|g'(x(\tau^b_{n,0}))\big| \le 2^{-5Hn/4}\big|g'(x(\tau^b_{n,0}))\big|.
	\end{equation*}
	Take $n_x:= n_{1,x} \vee n_{2,x}$. Together with \eqref{three terms eq} and \eqref{eq contribution 3}, the above inequality yields that for $n \ge n_x$, we get 
	\begin{equation*}
		\begin{split}
			&\left(\big(g'(x(\tau^a_{n,0}))\big)^2(1 - \delta_n)^2-2^{-5Hn/4}\big|g'(x(\tau^b_{n,0}))\big|\right) \tau_H \le2^{n(4H-3)} \norm{\bar{\bm \vartheta}^v_{2n,0}}^2_{\ell_2}\\  & \le \left(\big(g'(x(\tau^b_{n,0}))\big)^2(1 + \delta_n)^2+2^{-5Hn/4}\big|g'(x(\tau^b_{n,0}))\big|\right)  \tau_H.
		\end{split}
	\end{equation*}
	In particular, note that the value of $n_{x}$ depends only on the trajectory $x$ but not on the index $i$, thus, the above inequality carries over to all $0 \le i \le 2^n-1$. This completes the proof.
\end{proof}

\begin{lemma}\label{lemma bound 2}
	Suppose that $g \in C^2(\bR)$ is strictly monotone, and let $x$ be a sample path of $W^H$. Then, there exist $n_x \in \bN$, $c_H > 0$ and $\lambda_x > 0$ such that for $n \ge n_x$, 
	\begin{equation}\label{eq bound 1}
		\begin{split}
			(1 - \delta_n)^2(1 - \varepsilon_n) \tau_H \left(\int_{0}^{1}\big(g'(x(s))\big)^2\,ds\right)  &\le2^{n(4H-4)}\norm{\bar{\bm \vartheta}^v_{2n}}^2_{\ell_2}\\ &\le (1 + \delta_n)^2(1+\varepsilon_n) \tau_H \left(\int_{0}^{1}\big(g'(x(s))\big)^2\,ds\right),
		\end{split}
	\end{equation}
	where $\delta_n = c_H \cdot 2^{-n/2}\sqrt{n\log 2  + 2\log n}$ and $\varepsilon_n = \lambda_x \cdot 2^{-nH}\sqrt{n}$.
\end{lemma}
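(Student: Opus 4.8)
The plan is to derive \eqref{eq bound 1} by summing the per-subvector estimate of \Cref{lemma bound} over $i$ and recognizing the resulting sums as Riemann sums for $\int_0^1\big(g'(x(s))\big)^2\,ds$. Since the subvectors $\bar{\bm\vartheta}^v_{2n,i}$, $0\le i\le 2^n-1$, partition $\bar{\bm\vartheta}^v_{2n}$, we have $\norm{\bar{\bm\vartheta}^v_{2n}}^2_{\ell_2}=\sum_{i=0}^{2^n-1}\norm{\bar{\bm\vartheta}^v_{2n,i}}^2_{\ell_2}$. Multiplying the two-sided bound of \Cref{lemma bound} by $2^{n(4H-3)}$, summing over $i$, and dividing by $2^n$ --- which converts the prefactor $2^{n(4H-3)}$ into $2^{n(4H-4)}$ --- reduces the claim to controlling the sums $2^{-n}\sum_{i}\big(g'(x(\tau^a_{n,i}))\big)^2$ and $2^{-n}\sum_{i}\big(g'(x(\tau^b_{n,i}))\big)^2$, together with the lower-order remainder $2^{-5Hn/4}\cdot 2^{-n}\sum_{i}\big|g'(x(\tau^b_{n,i}))\big|$, where by \Cref{lemma bound} the points $\tau^a_{n,i},\tau^b_{n,i}$ lie in the $i$-th dyadic interval $[2^{-n}i,2^{-n}(i+1)]$.

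The next step is a Riemann-sum estimate with the correct rate. For a typical path $x$, the range $x([0,1])$ is a compact interval on which $(g')^2$ is Lipschitz, say with constant $L$; and by L\'evy's uniform modulus of continuity of fractional Brownian motion, the oscillation of $x$ over any interval of length $2^{-n}$ is $\mathcal{O}(2^{-nH}\sqrt n)$, so the oscillation of $(g'\circ x)^2$ over such an interval is at most $L\cdot\mathcal{O}(2^{-nH}\sqrt n)$. Writing, for arbitrary $\tau_{n,i}\in[2^{-n}i,2^{-n}(i+1)]$,
\[
\int_0^1\big(g'(x(s))\big)^2\,ds-2^{-n}\sum_{i=0}^{2^n-1}\big(g'(x(\tau_{n,i}))\big)^2=\sum_{i=0}^{2^n-1}\int_{2^{-n}i}^{2^{-n}(i+1)}\Big(\big(g'(x(s))\big)^2-\big(g'(x(\tau_{n,i}))\big)^2\Big)\,ds
\]
and bounding the integrand interval by interval yields
\[
\Big|\,2^{-n}\sum_{i=0}^{2^n-1}\big(g'(x(\tau_{n,i}))\big)^2-\int_0^1\big(g'(x(s))\big)^2\,ds\,\Big|\le C_x\cdot 2^{-nH}\sqrt n
\]
uniformly in the choice of the $\tau_{n,i}$, and the same estimate with $|g'\circ x|$ in place of $(g'\circ x)^2$ shows in particular that $2^{-n}\sum_{i}|g'(x(\tau^b_{n,i}))|$ stays bounded in $n$.

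Finally I would assemble the pieces and pass to the multiplicative form in \eqref{eq bound 1}. Since $g$ is strictly monotone, a typical path $x$ is non-constant while $\{g'=0\}$ has empty interior, so $I:=\int_0^1\big(g'(x(s))\big)^2\,ds>0$ (this is also recorded in the remark following \Cref{thm main} and follows from the absolute continuity of the law of $X$ established in \cite{HanSchiedGirsanov}). Hence the additive remainder $C_x\cdot 2^{-nH}\sqrt n$ equals $I$ times $(C_x/I)\cdot 2^{-nH}\sqrt n$, and because $2^{-5Hn/4}\le 2^{-nH}\sqrt n$ the third sum above also contributes at most $I\cdot\mathcal{O}(2^{-nH}\sqrt n)$; choosing $\lambda_x$ large enough, and $n_x$ large enough that the factors $(1\pm\delta_n)^2$ are bounded above and below by positive constants, absorbs all these terms and gives precisely \eqref{eq bound 1} with $\varepsilon_n=\lambda_x\cdot 2^{-nH}\sqrt n$.

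The one genuinely delicate point is the sharp Riemann-sum rate: the plain $\alpha$-H\"older continuity of $W^H$ with $\alpha<H$, which is what \Cref{lemma bound} relies on, would only produce an error $\mathcal{O}(2^{-\alpha n})$ rather than $\mathcal{O}(2^{-nH}\sqrt n)$. Bridging this gap is exactly where L\'evy's uniform modulus of continuity, i.e.\ $\limsup_{h\downarrow 0}\sup_{|t-s|\le h}|W^H_t-W^H_s|\big/\big(h^H\sqrt{\log(1/h)}\big)<\infty$ $\bP$-a.s., applied at the dyadic scale $h=2^{-n}$, is needed; the remaining steps --- termwise summation of \Cref{lemma bound}, the elementary interval-wise error estimate, and the absorption of the cross term and the $2^{-5Hn/4}$ remainder into $\varepsilon_n$ --- are routine.
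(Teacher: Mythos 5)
Your proposal is correct and follows essentially the same route as the paper: sum the two-sided subvector bounds of \Cref{lemma bound} over $i$, control the resulting Riemann sums for $\int_0^1\big(g'(x(s))\big)^2\,ds$ via the exact uniform modulus of continuity $\omega(u)=u^H\sqrt{\log(1/u)}$ of $W^H$ (the paper invokes \cite[Theorem 7.2.14]{MarcusRosen} at scale $2^{-n}$, exactly as you do), and absorb the $2^{-5Hn/4}$ remainder and the Riemann-sum error into the multiplicative factor $1\pm\varepsilon_n$ using $\int_0^1\big(g'(x(s))\big)^2\,ds>0$. The only cosmetic difference is that you bound the oscillation of $(g'\circ x)^2$ via a Lipschitz constant of $(g')^2$ on the compact range of $x$, while the paper writes out the mean value theorem with $g''$ explicitly; you also correctly single out the genuinely delicate point, namely that plain $\alpha$-H\"older continuity with $\alpha<H$ would not yield the $2^{-nH}\sqrt{n}$ rate.
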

\begin{proof}
	We begin by proving the upper bound in \eqref{eq bound 1}. It follows from \Cref{lemma bound} that there exists $n_{3,x} \in \bN$ such that for $n \ge n_{3,x}$, we have
	\begin{equation}\label{eq upper bound 1}
		\begin{split}
			2^{n(4H-4)}\norm{\bar{\bm \vartheta}^v_{2n}}^2_{\ell_2} &= 2^{-n}\sum_{k = 0}^{2^n-1}2^{n(4H-3)} \norm{\bar{\bm \vartheta}^{v}_{2n,k}}^2_{\ell_2} \\&\le 2^{-n}\sum_{i = 0}^{2^n-1}\left(\big(g'(x(\tau^b_{n,i}))\big)^2(1 + \delta_n)^2+2^{-5Hn/4}\big|g'(x(\tau^b_{n,i}))\big|\right)  \tau_H.
		\end{split}
	\end{equation}
	Furthermore, it follows from \cite[Theorem 7.2.14]{MarcusRosen} that the fractional Brownian motion $W^H$ admits an exact uniform modulus of continuity $\omega(u) = u^H\sqrt{\log(1/u)}$. That is, 
	\begin{equation*}
		\mathbb{P}\left(\lim_{h \da 0} \sup_{\substack{t, s \in[0,1] \\|t-s|<h}} \frac{\left|W^H_t - W^H_s\right|}{\omega(|t-s|)}=\sqrt{2}\right)=1.
	\end{equation*}
	Hence, there exists $n_{4,x} \in \bN$ such that for $n \ge n_{4,x}$ and $0 \le i \le 2^n-1$, we have 
	\begin{equation}\label{eq modulus}
		|x(s) - x(\tau^b_{n,i})| \le \sqrt{2}\cdot \omega\left(|s - \tau^b_{n,i}|\right)\le \sqrt{2}\cdot \omega(2^{-n}) = 2^{-Hn}\sqrt{2n\log2},
	\end{equation}
	for $s \in [2^{-n}i,2^{-n}(i+1)]$. Since $g \in C^2(\bR)$ and $x \in C[0,1]$, then there exist positive constants $\kappa_x, \wt \kappa_x > 0$ such that $|g'(x(s))| \le \kappa_x$ and $|g''(x(s))| \le \wt \kappa_x$ for $s \in [0,1]$. Together with \eqref{eq modulus}, it then yields that for $n \ge n_{4,x}$, 
	\begin{equation}\label{eq sum integral}
		\begin{split}
			&\quad\,\left|\int_{0}^{1}\big(g'(x(s))\big)^2\,ds - 2^{-n}\sum_{i = 0}^{2^n-1}\big(g'(x(\tau^b_{n,i}))\big)^2\right|\\& = \left|\sum_{i = 0}^{2^n-1}\int_{2^{-n}i}^{2^{-n}(i+1)}\left(\big(g'(x(s))\big)^2 -\big(g'(x(\tau^b_{n,i}))\big)^2\right)ds \right|\\&= \left|\sum_{i = 0}^{2^n-1}\int_{2^{-n}i}^{2^{-n}(i+1)}\left(g'(x(s))+g'(x(\tau^b_{n,i}))\right)g''(x(\wt \tau^b_{n,i}))\big(x(s) - x(\tau^b_{n,i})\big)ds \right|\\& \le \kappa_x \wt \kappa_x 2^{-nH} \sqrt{8n\log2},
		\end{split}
	\end{equation}
	where $\wt \tau^b_{n,i} \in [2^{-n}i,2^{-n}(i+1)]$ are intermediate times. Take $$\lambda_x:= \sqrt{32\log2}\cdot\kappa_x\wt \kappa_x \big(\int_{0}^{1}\big(g'(x(s))\big)^2\,ds\big)^{-1},$$ and it then follows that
	\begin{equation}\label{eq upper bound 2}
		2^{-n}(1 + \delta_n)^2\sum_{i = 0}^{2^n-1}\big(g'(x(\tau^b_{n,i}))\big)^2 \le (1 + \delta_n)^2\left(1 +\frac{\lambda_x}{2}\cdot2^{-nH}\sqrt{n}\right)\int_{0}^{1}\big(g'(x(s))\big)^2\,ds.
	\end{equation}
	As $g' \circ x$ is continuous, we have $\sup_{s \in [0,1]} |g'(x(s))| < \infty$ and $\sup_n 2^{-n}\sum_{i =0}^{2^{n}-1}|g'(x(\tau^b_{n,i}))| < \infty$. Moreover, by assumption, we have $\int_{0}^{1}(g'(x(s)))^2\,ds > 0$. Finally, we have $(1 - \delta_n)^2 \ua 1$ as $n \ua \infty$. Thus, there exists $n_{5,x} \in \bN$ such that for $n \ge n_{5,x}$, we get 
	\begin{align}
		2^{-(1+5H/4)n}\sum_{i = 0}^{2^n-1}\big|g'(x(\tau^b_{n,i}))\big| &\le\frac{ \lambda_x}{2}\cdot2^{-Hn}(1 - \delta_n)^2\sqrt{n}\int_{0}^{1}\big(g'(x(s))\big)^2\,ds \label{eq lower sum}\\ &\le\frac{ \lambda_x}{2}\cdot2^{-Hn}(1 + \delta_n)^2\sqrt{n}\int_{0}^{1}\big(g'(x(s))\big)^2\,ds. \label{eq upper sum}
	\end{align}
	Thus, for $n \ge  n_{3,x} \vee n_{4,x} \vee n_{5,x}$, it follows from \eqref{eq upper bound 1}, \eqref{eq upper bound 2} and \eqref{eq upper sum} that
	\begin{equation*}
		\begin{split}
			2^{n(4H-4)}\norm{\bar{\bm \vartheta}^v_{2n}}^2_{\ell_2} &= 2^{-n}\sum_{k = 0}^{2^n-1}2^{n(4H-3)} \norm{\bar{\bm \vartheta}^{v}_{2n,k}}^2_{\ell_2}\\ &\le (1 + \delta_n)^2\left(1 + \lambda_x2^{-nH}\sqrt{n}\right)\tau_H\int_{0}^{1}\big(g'(x(s))\big)^2\,ds,
		\end{split}
	\end{equation*}
	which yields the upper bound in \eqref{eq bound 1}. 
	
	For the lower bound, note that $\tau^a_{n,i}$ are also intermediate times within $[2^{-n}i,2^{-n}(i+1)]$. Following the arguments in \eqref{eq sum integral} yields the existence of $n_{6,x} \in \bN$ such that for $n \ge n_{6,x}$,
	\begin{equation*}
		\left|\int_{0}^{1}\big(g'(x(s))\big)^2\,ds - 2^{-n}\sum_{i = 0}^{2^n-1}\big(g'(x(\tau^b_{n,i}))\big)^2\right| \le \frac{\lambda_x}{2} 2^{-nH}\sqrt{n}\int_{0}^{1}\big(g'(x(s))\big)^2\,ds,
	\end{equation*}
	which then implies
	\begin{equation*}
		2^{-n}(1 - \delta_n)^2\sum_{i = 0}^{2^n-1}\big(g'(x(\tau^a_{n,i}))\big)^2 \ge (1 - \delta_n)^2\left(1 -\frac{\lambda_x}{2}\cdot2^{-nH}\sqrt{n}\right)\int_{0}^{1}\big(g'(x(s))\big)^2\,ds.
	\end{equation*}
	This, together with \eqref{eq lower sum}, shows that for $n \ge n_{3,x} \vee n_{5,x} \vee n_{6,x}$, 
	\begin{equation*}
		2^{n(4H-4)}\norm{\bar{\bm \vartheta}^v_{2n}}^2_{\ell_2}  \ge (1 - \delta_n)^2\left(1 - \lambda_x2^{-nH}\sqrt{n}\right)\tau_H\int_{0}^{1}\big(g'(x(s))\big)^2\,ds.
	\end{equation*}
	Now, taking $n_x := n_{3,x} \vee n_{4,x} \vee n_{5,x} \vee n_{6,x}$ completes the proof.
\end{proof}

\begin{lemma}\label{lemma convergence}
	Suppose that $g \in C^2(\bR)$ is strictly monotone, and let $x$ be a typical sample path of $W^H$. Then, we have
	\begin{equation*}
		\left|\wh \cR_{2n}\left(\frac{v}{\sqrt{\tau_H \int_{0}^{1}\big(g'(x(s))\big)^2\,ds}}\right) - H\right|  = \mathcal{O}\left(n^{-\frac{1}{2}}\cdot2^{-(H \wedge \frac{1}{2})n}\right).
	\end{equation*} 
\end{lemma}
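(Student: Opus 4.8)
The plan is to reduce the claim to the two-sided bound already obtained in \Cref{lemma bound 2}, using the simple way in which $\wh\cR_{2n}$ behaves under rescaling of its argument. The first step is to record this scaling behaviour. Each coefficient $\vartheta_{2n,k}$ in \eqref{eq vartheta} is a fixed linear functional of the function it is applied to (a signed combination of four of its values), so $\norm{\bar{\bm\vartheta}^{cf}_{2n}}_{\ell_2}=c\,\norm{\bar{\bm\vartheta}^{f}_{2n}}_{\ell_2}$ for every $c>0$, and hence, directly from \eqref{wh Rn eq},
\[
\wh\cR_{2n}(cf)=1-\frac1{2n}\log_2\!\big(c\,\norm{\bar{\bm\vartheta}^{f}_{2n}}_{\ell_2}\big).
\]
I would apply this with $f=v$ and $c=\big(\tau_H\int_0^1(g'(x(s)))^2\,ds\big)^{-1/2}$, which is a well-defined finite positive number for a typical sample path $x$ (here $\tau_H=\trace{\Psi_0}>0$, $g'\circ x$ is continuous on $[0,1]$, and $\int_0^1(g'(x(s)))^2\,ds>0$).

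The second step substitutes the estimate of \Cref{lemma bound 2}. Dividing the displayed inequality there by $\tau_H\int_0^1(g'(x(s)))^2\,ds$ and taking square roots gives, for all $n\ge n_x$ large enough that $\varepsilon_n<1$,
\[
(1-\delta_n)\sqrt{1-\varepsilon_n}\;\le\;2^{n(2H-2)}\,c\,\norm{\bar{\bm\vartheta}^{v}_{2n}}_{\ell_2}\;\le\;(1+\delta_n)\sqrt{1+\varepsilon_n}.
\]
Writing $c\,\norm{\bar{\bm\vartheta}^{v}_{2n}}_{\ell_2}=2^{n(2-2H)}r_n$ with $r_n$ in this interval, the formula for $\wh\cR_{2n}(cv)$ collapses: the factor $2^{n(2-2H)}$ contributes $\frac1{2n}\log_2 2^{n(2-2H)}=1-H$ to the logarithm, and $1-(1-H)=H$, so one is left with $\wh\cR_{2n}(cv)=H-\frac1{2n}\log_2 r_n$. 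Consequently $\big|\wh\cR_{2n}(cv)-H\big|=\frac1{2n}\,|\log_2 r_n|$.

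The last step is a routine estimate of $|\log_2 r_n|$. Since $\delta_n\downarrow0$ and $\varepsilon_n\downarrow0$, for $n$ large we have $\delta_n,\varepsilon_n\le\tfrac12$, and the elementary inequalities $|\log_2(1+t)|\le t/\log 2$ and $-\log_2(1-t)\le 2t/\log 2$ on $[0,\tfrac12]$ give $|\log_2 r_n|\le|\log_2(1\pm\delta_n)|+\tfrac12|\log_2(1\pm\varepsilon_n)|=\mathcal{O}(\delta_n+\varepsilon_n)$. Recalling $\delta_n=c_H\,2^{-n/2}\sqrt{n\log 2+2\log n}=\mathcal{O}(2^{-n/2}\sqrt n)$ and $\varepsilon_n=\lambda_x\,2^{-nH}\sqrt n$, this yields
\[
\big|\wh\cR_{2n}(cv)-H\big|=\mathcal{O}\!\Big(\tfrac1n(\delta_n+\varepsilon_n)\Big)=\mathcal{O}\!\big(n^{-1/2}2^{-n/2}+n^{-1/2}2^{-nH}\big)=\mathcal{O}\!\big(n^{-1/2}2^{-(H\wedge\frac12)n}\big),
\]
which is the assertion. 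I do not expect a genuine obstacle here beyond having \Cref{lemma bound 2} in hand; the only points that need a little attention are checking that the normalizing constant $c$ is finite and strictly positive for a typical $x$, and passing to $n$ large enough that $\varepsilon_n<1$ so that the lower bound $\sqrt{1-\varepsilon_n}$ and the estimate $-\log_2(1-\varepsilon_n)=\mathcal{O}(\varepsilon_n)$ are legitimate.
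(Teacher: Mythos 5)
Your proposal is correct and follows essentially the same route as the paper: both reduce the claim to the two-sided bound of \Cref{lemma bound 2} via the identity $\wh\cR_{2n}(cf)=1-\frac{1}{2n}\log_2\big(c\,\norm{\bar{\bm\vartheta}^f_{2n}}_{\ell_2}\big)$, absorb the factor $2^{n(2-2H)}$ into $1-H$, and bound the remaining logarithm by $\mathcal{O}(\delta_n+\varepsilon_n)$. Your write-up with explicit inequalities for $r_n$ is in fact slightly more careful than the paper's use of asymptotic equivalences, but the substance is identical.
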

\begin{proof}
	Note that 
	\begin{equation*}
		\begin{split}
			H - \wh \cR_{2n}\left(\frac{v}{\sqrt{\tau_H \int_{0}^{1}\big(g'(x(s))\big)^2\,ds}}\right)&=(H - 1) + \frac{1}{4n}\log_2 \frac{\norm{\bar{\bm \vartheta}^v_{2n}}^2_{\ell_2}}{\tau_H \int_{0}^{1}\big(g'(x(s))\big)^2\,ds}\\&= \frac{1}{4n}\log_2 \frac{2^{n(4H-4)}\norm{\bar{\bm \vartheta}^v_{2n}}^2_{\ell_2}}{\tau_H \int_{0}^{1}\big(g'(x(s))\big)^2\,ds}\\&=  \frac{1}{4n}\log_2\left( 1 + \left(1-\frac{2^{n(4H-4)}\norm{\bar{\bm \vartheta}^v_{2n}}^2_{\ell_2}}{\tau_H \int_{0}^{1}\big(g'(x(s))\big)^2\,ds}\right)\right)\\& \sim \frac{1}{4n}\left(1-\frac{2^{n(4H-4)}\norm{\bar{\bm \vartheta}^v_{2n}}^2_{\ell_2}}{\tau_H \int_{0}^{1}\big(g'(x(s))\big)^2\,ds} \right)\quad \text{as} \quad n \ua \infty.
		\end{split}
	\end{equation*}
	Applying \Cref{lemma bound 2} gives 
	\begin{equation*}
		\frac{1}{4n}\left(1-\frac{2^{n(4H-4)}\norm{\bar{\bm \vartheta}^v_{2n}}^2_{\ell_2}}{\tau_H \int_{0}^{1}\big(g'(x(s))\big)^2\,ds} \right) \sim \frac{1}{n}\left((1 + \delta_n)^2(1 + \varepsilon_n) - 1\right) \sim \frac{\delta_n \vee \varepsilon_n}{n} \quad \text{as} \quad n \ua \infty\footnote{\text{For real-valued sequences $(a_n)$ and $(b_n)$, we write $a_n \sim b_n$ as $n \ua \infty$ if $\lim\limits_{n \ua \infty} a_n/b_n = c$ for some $c > 0$.}},
	\end{equation*}
	where $\delta_n$ and $\varepsilon_n$ are as in \Cref{lemma bound 2}. This completes the proof.
\end{proof}

\begin{proof}[Proof of \Cref{thm main}]
	It was shown in \cite[Theorem 1.4]{HanSchiedGirsanov} that the law of $(X_t)_{t \in [0,1]}$ is absolutely continuous with respect to the law of $(x_0 + W^H_t)_{t \in [0,1]}$. Hence, it suffices to prove this assertion for fractional Brownian motion $W^H$ and $V_t = \int_{0}^{t}g(W^H_s) \,ds$. 
	
	Now, suppose that $n = 2m$ for some $m \in \bN$. It then follows from \Cref{lemma convergence} that with probability one,
	\begin{equation*}
		\left|\wh \cR_{n}\left(\frac{V}{\sqrt{\tau_H \int_{0}^{1}\big(g'(W^H_s)\big)^2\,ds}}\right) - H\right| = \mathcal{O}\left(n^{-\frac{1}{2}}\cdot2^{-(\frac{H}{2} \wedge \frac{1}{4})n}\right).
	\end{equation*}
	Thus, it follows from \cite[Proposition 2.6(d)]{HanSchiedDerivative} that the assertion in \Cref{thm main} holds for the case $n = 2m$ for $m \in \bN$. For the case $n = 2m+1$ for $m \in \bN$, the assertion can be proved analogously. This completes the proof.
\end{proof}

\bibliographystyle{plain}
\bibliography{CTBook.bib}

\begin{thebibliography}{10}

\bibitem{BaudoinHairer}
Fabrice Baudoin and Martin Hairer.
\newblock A version of {H}{\"o}rmander's theorem for the fractional {B}rownian
  motion.
\newblock {\em Probability theory and related fields}, 139(3):373--395, 2007.

\bibitem{BayerGatheralFriz16}
Christian Bayer, Peter Friz, and Jim Gatheral.
\newblock Pricing under rough volatility.
\newblock {\em Quantitative Finance}, 16(6):887--904, 2016.

\bibitem{RoughvolBook}
Christian Bayer, Peter~K. Friz, Masaaki Fukasawa, Jim Gatheral, Antoine
  Jacquier, and Mathieu Rosenbaum, editors.
\newblock {\em Rough volatility}.
\newblock Financial Mathematics. Society for Industrial and Applied
  Mathematics, Philadelphia, 2024.

\bibitem{BolkoPakkanen2022GMM}
Anine Bolko, Kim Christensen, Mikko Pakkanen, and Bezirgen Veliyev.
\newblock A {GMM} approach to estimate the roughness of stochastic volatility.
\newblock {\em Journal of Econometrics}, 235(2):745--778, 2023.

\bibitem{Chong2022CLT}
Carsten~H. Chong, Marc Hoffmann, Yanghui Liu, Mathieu Rosenbaum, and Gr\'egoire
  Szymanski.
\newblock Statistical inference for rough volatility: central limit theorems.
\newblock {\em Ann. Appl. Probab.}, 34(3):2600--2649, 2024.

\bibitem{Chong2022Minimax}
Carsten~H. Chong, Marc Hoffmann, Yanghui Liu, Mathieu Rosenbaum, and Gr\'egoire
  Szymansky.
\newblock Statistical inference for rough volatility: minimax theory.
\newblock {\em Ann. Statist.}, 52(4):1277--1306, 2024.

\bibitem{ContDasArtefact}
Rama Cont and Purba Das.
\newblock Rough volatility: fact or artefact?
\newblock {\em Sankhya B}, pages 1--33, 2024.

\bibitem{EuchRosenbaumGatheral2019roughening}
Omar El~Euch, Jim Gatheral, and Mathieu Rosenbaum.
\newblock Roughening {H}eston.
\newblock {\em Risk}, pages 84--89, 2019.

\bibitem{EuchRosenbaumRoughHeston18}
Omarl El~Euch and Mathieu Rosenbaum.
\newblock Perfect hedging in rough {H}eston models.
\newblock {\em The Annals of Applied Probability}, 28(6):3813--3856, 2018.

\bibitem{Forde2022}
Martin Forde, Masaaki Fukasawa, Stefan Gerhold, and Benjamin Smith.
\newblock {The Riemann--Liouville field and its GMC as H→ 0, and skew
  flattening for the rough Bergomi model}.
\newblock {\em Statistics \& Probability Letters}, 181:109265, 2022.

\bibitem{Fukasawa2022Estimation}
Masaaki Fukasawa, Tetsuya Takabatake, and Rebecca Westphal.
\newblock Consistent estimation for fractional stochastic volatility model
  under high-frequency asymptotics.
\newblock {\em Mathematical Finance}, 32(4):1086--1132, 2022.

\bibitem{Gatheral2018rough}
Jim Gatheral, Thibault Jaisson, and Mathieu Rosenbaum.
\newblock A rough volatility model.
\newblock {\em Quantitative Finance}, 18(6):933--949, 2018.

\bibitem{GatheralRosenbaum}
Jim Gatheral, Thibault Jaisson, and Mathieu Rosenbaum.
\newblock Volatility is rough.
\newblock {\em Quantitative Finance}, 18(6):933--949, 2018.

\bibitem{HanSchiedDerivative}
Xiyue Han and Alexander Schied.
\newblock Estimating the roughness exponent of stochastic volatility from
  discrete observations of the realized variance.
\newblock {\em arXiv 2307.02582}, 2023.

\bibitem{HanSchiedGirsanov}
Xiyue Han and Alexander Schied.
\newblock A criterion for absolute continuity relative to the law of fractional
  {B}rownian motion.
\newblock {\em Electronic Communications in Probability}, 29:1--10, 2024.

\bibitem{HanSchiedMatrix}
Xiyue Han and Alexander Schied.
\newblock {Robust Faber--Schauder approximation based on discrete observations
  of an antiderivative}.
\newblock {\em \rm{To appear in} Mathematics of Operations Research}, 2025.

\bibitem{HanSchiedHurst}
Xiyue Han and Alexander Schied.
\newblock The roughness exponent and its model-free estimation.
\newblock {\em The Annals of Applied Probability}, 35(2):1049--1082, 2025.

\bibitem{MarcusRosen}
Michael~B. Marcus and Jay Rosen.
\newblock {\em Markov processes, {G}aussian processes, and local times}, volume
  100 of {\em Cambridge Studies in Advanced Mathematics}.
\newblock Cambridge University Press, Cambridge, 2006.

\end{thebibliography}

\clearpage
\end{document}